\def\BibTeX{{\rm B\kern-.05em{\sc i\kern-.025em b}\kern-.08em
    T\kern-.1667em\lower.7ex\hbox{E}\kern-.125emX}}
\newtheorem{remark}{Remark}
\newtheorem{lemma}{Lemma}
\newtheorem{proposition}{Proposition}
\newtheorem{theorem}{Theorem}
\title{An Information Geometric Approach to Fairness With Equalized Odds Constraint}
\name{Amirreza Zamani$^\dagger$, Ayfer \"{O}zg\"{u}r$^\ddagger$, Mikael Skoglund$^\dagger$}
\address{$^\dagger$Division of Information Science and Engineering, KTH Royal Institute of Technology \\
			$^\ddagger$Department of Electrical Engineering, Stanford\\
\texttt{amizam@kth.se,  aozgur@stanford.edu, skoglund@kth.se}}
\begin{document}
%
\maketitle
\begin{abstract}
We study the statistical design of a fair mechanism that attains equalized odds, where an agent uses some useful data (database) $X$ to solve a task $T$. Since both $X$ and $T$ are correlated with some latent sensitive attribute $S$, the agent designs a representation $Y$ that satisfies an equalized odds, that is, such that $I(Y;S|T) =0$. In contrast to our previous work, we assume here that the agent has no direct access to $S$ and $T$; hence, the Markov chains $S - X - Y$ and $T - X - Y$ hold. Furthermore, we impose a geometric structure on the conditional distribution $P_{S|Y}$, allowing $Y$ and $S$ to have a small correlation, bounded by a threshold.
When the threshold is small, concepts from information geometry allow us to approximate mutual information and reformulate the fair mechanism design problem as a quadratic program with closed-form solutions under certain constraints. For other cases, we derive simple, low-complexity lower bounds based on the maximum singular value and vector of a matrix. Finally, we compare our designs with the optimal solution in a numerical example.

\end{abstract}
\begin{keywords}
fairness and privacy design, equalized odds, information geometry, KL-approximation, $\chi^2$ point-wise measure 
\end{keywords}
\vspace{-3mm}
\section{Introduction}
\label{sec:intro}
\vspace{-2mm}
As shown in Fig.~\ref{ICASSP}, an agent uses observable data $X$ to make decisions about a task $T$, where both $X$ and $T$ are correlated with a latent sensitive attribute $S$. For example, $X$ could be a resume, $T$ an employment decision, and $S$ the applicant’s gender or ethnicity.

As outlined in \cite{AmirITW2024,vari}, it is important to ensure that decisions are not unfair and that inferences do not violate privacy leakage constraints. To this end, we can design a \emph{private} or \emph{fair} representation $Y$ of the data, that is, a representation that contains the maximum possible information about the task $T$ while satisfying fairness or privacy criteria with respect to the sensitive attribute or secret
\cite{AmirITW2024, vari,zhao2022, zhao2019,zemel,hardt, king3, borz, khodam, Khodam22,Yanina1,fairAsoodeh}. Privacy and fairness are closely related \cite{vari,AmirITW2024,fairAsoodeh} and can be jointly leveraged in applications such as classification. In particular, \cite{AmirITW2024} considers the design of $Y$ that satisfies $I(Y;S) = 0$. 
 In the privacy literature, independence of $Y$ and $S$ is called \emph{perfect privacy} or \emph{perfect secrecy}~\cite{borz,Yanina1}, while in the fairness literature it is known as \emph{perfect demographic} (or \emph{statistical}) \emph{parity}~\cite{vari,zhao2022,zhao2019,zemel}.
As noted in \cite{11123370,khodam,Yanina1,fairAsoodeh,king3}, perfect privacy or demographic parity is often unattainable; thus \cite{11123370} proposes a bounded criterion, $I(S;Y)\leq \epsilon$.
 In contrast to \cite{AmirITW2024,11123370}, here, we use the \emph{equalized-odds} notion of fairness~\cite{hardt}, employing $I(Y;S\mid T)=0$ as the criterion. We also assume that neither the sensitive attribute $S$ nor the task $T$ is directly accessible to the agent (Fig.~\ref{ICASSP}), i.e., the Markov chains $S\!-\!X\!-\!Y$ and $T\!-\!X\!-\!Y$ hold.

\begin{figure}[]
	\centering
	\includegraphics[width = 0.39\textwidth]{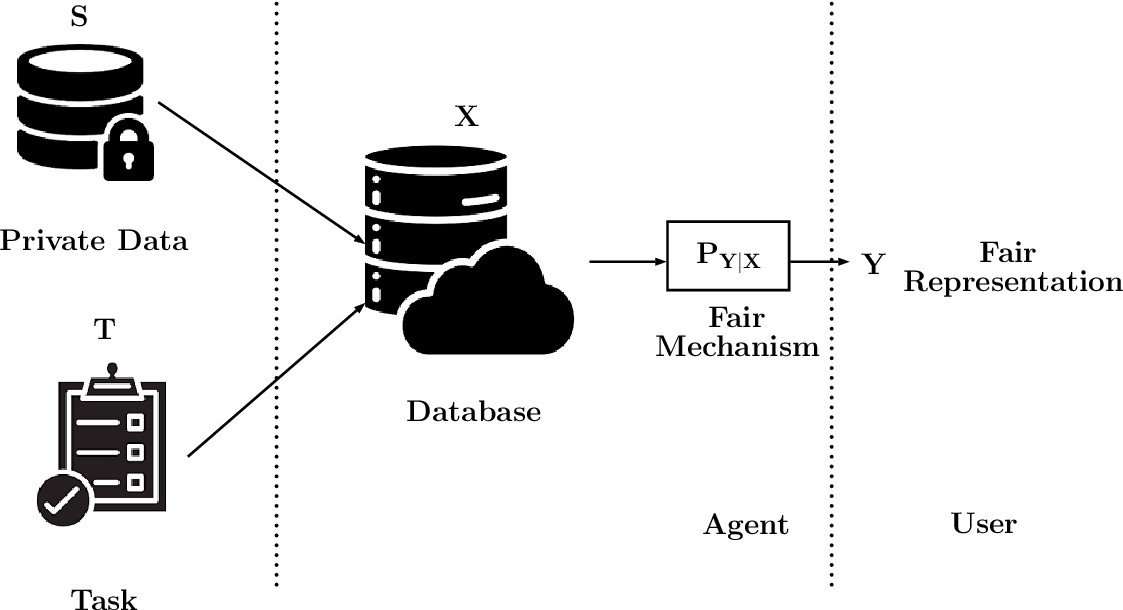}
	\caption{Data representation with an equalized odds constraint. The goal is to design a representation $Y$ of the data $X$ that is useful for the task $T$, is compressed, and leaks within a controlled threshold of the sensitive attribute $S$. Here, the agent has no direct access to $S$ and $T$.}
	\label{ICASSP}
    \vspace{-5mm}
\end{figure}
Many information-theoretic problems are challenging due to the lack of geometric structure on probability distributions. When distributions are close, KL divergence and mutual information can be approximated by a weighted squared Euclidean distance, enabling tractable approximations. This approach has been applied to point-to-point and broadcast channels \cite{Shashi,huang}, as well as to privacy settings \cite{khodam,Khodam22,king3,emma,8125176}.
Under the fairness requirements, we study the trade-off between utility, compression, and equalized odds. Specifically, we maximize the information that $Y$ conveys about the task $T$, subject to the compression constraint $I(Y;X)\le r$ and the equalized-odds condition $I(S;Y|T)=0$. In addition, we impose a point-wise constraint on $P_{S|Y}$, allowing us to use information-geometric tools as in \cite{khodam} to approximate mutual information and reduce the problem to a tractable quadratic linear-algebra form with intuitive insight.
\\
\textbf{Prior works on fairness:}
The notion of \emph{fair representations} was introduced by Zemel et al.~\cite{zemel}, advancing algorithmic fairness through deep learning. Later work has focused on adversarial~\cite{zemel,edwards2016censoring,zhao2019} and variational~\cite{vari,creager2019flexibly,louizos2015variational,gupta2021controllable} approaches, while theoretical trade-offs between utility and fairness are studied in~\cite{zhao2022}.
In \cite{vari}, the authors study an information-theoretic fairness problem and introduce the Conditional Fairness Bottleneck (CFB), which captures the trade-offs among utility, fairness, and compression of representations in terms of mutual information. This criterion has since been adopted in subsequent studies~\cite{gupta2021controllable,de2022funck}, with representation design based on a variational approach. In contrast, \cite{AmirITW2024} proposes a constructive theoretical framework for fair representations with perfect demographic parity, providing upper and lower bounds on the utility--fairness--compression trade-off. The lower bounds are attained using randomization and extensions of the Functional and Strong Functional Representation Lemmas~\cite{king3}. Finally, \cite{11123370} extends this setting by relaxing perfect demographic parity to a bounded statistical-parity constraint.
In \cite{fairAsoodeh}, a binary classification problem subject to both differential privacy and fairness constraints is studied. In \cite{9517766}, the role of data processing in the fairness–accuracy trade-off, using equalized odds as the fairness criterion, has been studied.\\
\textbf{Contributions:}
Here, we propose a simple theoretical approach for designing fair representations under equalized odds, addressing the trade-off between utility, fairness, and compression \eqref{problem}. By imposing a point-wise constraint on $P_{S|Y}$, we restrict the correlation between $Y$ and $S$; when this threshold is small, the optimization reduces to a quadratic problem. The resulting design relies on the maximum singular value and vector of a matrix, and we validate it against the optimal solution in a numerical example.
 \vspace{-4mm}
\section{System model} \label{sec:system}
\vspace{-3mm}
In this section, we introduce the problem of designing a compressed representation of data under equalized odds. Let the sensitive data $S$, shared data (useful data) $X$, and task $T$ be discrete random variables (RVs) defined on alphabets $\mathcal{S}$, $\mathcal{X}$, and $\mathcal{T}$, of finite cardinalities $\left| \mathcal{S} \right|
$, $\left| \mathcal{X} \right|$, and $\left| \mathcal{T} \right|
$, respectively. The marginal probability distributions of $S$, $X$, and $T$ are denoted by $P_S \in \mathbb{R}^{|\mathcal{S}|}$, $P_X \in \mathbb{R}^{|\mathcal{X}|}$, and $P_T \in \mathbb{R}^{|\mathcal{T}|}$, respectively. Furthermore, $P_{S,X,T} \in \mathbb{R}^{|\mathcal{S}|\times |\mathcal{X}| \times |\mathcal{T}|}$ denotes the joint distribution of $S$, $X$ and $T$. We assume that $S$, $T$, and $X$ form the Markov chain $S\!-\!T\!-\!X$; that is, the sensitive attribute $S$ and the database $X$ are correlated only through task $T$. For example, $S$ may be a deterministic function of $T$. We will discuss the benefits of this assumption later. Furthermore, we assume that $P_{T|X}$ and $P_{S|T}$ are invertible matrices yielding $|\mathcal{X}|=|\mathcal{S}|=|\mathcal{T}|$.
Let the discrete RV $Y \in \mathcal{Y}$ denote the fair representation. As we outlined before, the Markov chains $S-X-Y$ and $T-X-Y$ hold, as the agent has no direct access to $S$ and $T$.  
The goal is to design a mapping $P_{Y|X} \in \mathbb{R}^{|\mathcal{Y}| \times |\mathcal{X}|}$ that maximizes the information it keeps about the task $T$, while maintaining a minimum level of compression $r$ and satisfying the equalized odds $I(Y;S|T)=0$, i.e., $Y$ should satisfy the Markov chain $S-T-Y$. As we outlined earlier, in this work, we impose a point-wise (strong) $\chi^2$-constraint on $P_{S|Y}$ as follows. Here, the representation $Y$ should also satisfy
\begin{align}
    \chi^2(P_{S|Y=y};P_S)\!=\!\sum_s\left(\frac{P_{S|y}(s)-P_{S}(s)}{P_S(s)}\right)^2\!\!\!\!\leq \epsilon^2,\ \forall y,\label{local1}
\end{align}
where $\|\cdot\|$ corresponds to the Euclidean norm. 
Intuitively, for small $\epsilon$, \eqref{local1} means that the two distributions (vectors) $P_{S|Y=y}$ and $P_S$ are close to each other.
The closeness of $P_{S|Y=y}$ and $P_S$ allows us to locally approximate $I(T;Y)$ and $I(X;Y)$ which leads to an approximation of \eqref{problem}. Furthermore, \eqref{local1} illustrates how $Y$ and $S$ can be correlated. Using equalized odds, conditioned on $T$, $Y$ and $S$ should be independent, however, $Y$ and $S$ can still be correlated, and \eqref{local1} illustrates the correlation threshold. In other words, we restrict attention to the subset of representations $Y$ that satisfy both the equalized odds and the point-wise constraints. For more details on the point-wise measure, see \cite{khodam,Khodam22}.
Hence, the fair representation design problem can be stated as follows:
\begin{align}
    g^{r}_{\epsilon,\chi^2}(P_{S,X,T})
    &\triangleq\sup_{\begin{array}{c} 
	\substack{P_{Y|X}:S-X-Y,\ T-X-Y,\\ I(Y;S|T)=0,\\\chi^2(P_{S|y};P_S)\leq \epsilon^2,\ \forall y,\\ I(X;Y)\leq r }
	\end{array}}I(Y;T).
    \label{problem}
\end{align} 
The constraint $I(Y;S|T)=0$, corresponds to the equalized odds, the point-wise measure describes the correlation between $Y$ and $S$, and $I(X;Y)\leq r$ represents the compression rate constraint.\vspace{-2mm}
\begin{remark}
We define the utility–fairness–compression trade-off, subject to equalized-odds and compression constraints, as follows.
\begin{align}g^{r}_{\epsilon}(P_{S,X,T})
    &\triangleq\sup_{\begin{array}{c} 
	\substack{P_{Y|X}:S-X-Y,\ T-X-Y,\\I(Y;S|T)=0,\\ I(X;Y)\leq r }
	\end{array}}I(Y;T). \label{prob2}\end{align}
    Then, we have $g^{r}_{\epsilon,\chi^2}(P_{S,X,T})\leq g^{r}_{\epsilon}(P_{S,X,T})$. In other words, since problem \eqref{prob2} is hard to solve, we instead study the lower bound \eqref{problem} that can be addressed using the concepts from information geometry.
\end{remark}
\begin{remark}
	In this work, we assume that $P_{S|T}$ and $P_{T|X}$ are invertible; however, this assumption can be generalized using the techniques in \cite{Khodam22}. Here, we focus on invertible cases for design simplicity and leave the generalization to future work.
\end{remark}
\section{Main Results}\label{sec:resul}
\vspace{-3mm}
In this section, we first derive an approximation to \eqref{problem}. We then find lower bounds for this approximation and discuss their tightness. 
To do so,
using \eqref{local1}, we can rewrite the conditional distribution $P_{S|Y=y}$ as a perturbation of $P_S$. Thus, for any $y\in\mathcal{Y}$, we can write $P_{S|Y=y}=P_S+\epsilon\cdot J_{y}$, where $J_y\in\mathbb{R}^\mathcal{S}$ is a perturbation vector that has the following three properties
		\begin{align}
	&\sum_s J_{y}(s)=0,\ \ \forall y,\label{0}	 \\
	&\sum_y P_{Y}(y)J_{y}(s)=0, \ \forall s,\label{sum}	\\
	&\|[\sqrt{P_S}^{-1}]J_y\|_2=\sum_x \frac{J_{y}(s)^2}{P_S(s)}\leq 1 \label{1}.
	\end{align} 
We note that \eqref{sum} implies $\sum_u P_Y(y)J_y=0\in\mathbb{R}^{|\mathcal{S}|}$. 
The first two properties ensure that $P_{S|Y=y}$ is a valid probability distribution \cite{khodam,Khodam22}, and the third property follows from \eqref{local1}.
Next, we approximate $I(Y;X)$ and $I(Y;T)$ using the Markov chains and the perturbation vector $J_y$.
In the next results, we use the Bachmann-Landau notation where $o(\epsilon)$ describes the asymptotic behavior of a function $f:\mathbb{R}^+\rightarrow\mathbb{R}$ which satisfies $\frac{f(\epsilon)}{\epsilon}\rightarrow 0$ as $\epsilon\rightarrow 0$. Here, $[\sqrt{P_X}]$ denotes the diagonal matrix with entries $\sqrt{P_X(x)}$.
\begin{lemma}\label{lem1}
	For all $\epsilon<\frac{\min_{x\in\mathcal{X}}P_X(x)}{|\sigma_{\text{max}}(P_{T|X}^{-1}P_{S|T}^{-1})|\sqrt{\max_{s\in{\mathcal{S}}}P_S(s)}}$, we have
	\begin{align}
I(X;Y)&\!=\!\frac{1}{2}\epsilon^2\sum_y\! P_y\|[\sqrt{P_X}^{-1}]P_{T|X}^{-1}P_{S|T}^{-1}J_y\|^2\!+\!o(\epsilon^2)\nonumber\\&\simeq \frac{1}{2}\epsilon^2\sum_y P_y\|[\sqrt{P_X}^{-1}]P_{T|X}^{-1}P_{S|T}^{-1}J_y\|^2\label{appxy}
\end{align}
and for all $\epsilon<\frac{|\sigma_{\text{min}}(P_{S|T})|\min_{t\in\mathcal{T}}P_T(t)}{\sqrt{\max_{s\in{\mathcal{X}}}P_S(s)}}$, 
    we have
\begin{align}
I(T;Y)&=\frac{1}{2}\epsilon^2\sum_y P_y\|[\sqrt{P_T}^{-1}]P_{S|T}^{-1}J_y\|^2+o(\epsilon^2)\nonumber\\&\simeq \frac{1}{2}\epsilon^2\sum_y P_y\|[\sqrt{P_T}^{-1}]P_{S|T}^{-1}J_y\|^2\label{appty}
	\end{align}
\end{lemma}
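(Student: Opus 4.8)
The plan is to treat each mutual information as an average KL divergence and then transport the single perturbation $P_{S|Y=y}=P_S+\epsilon J_y$ backward through the two invertible channel matrices to obtain matching expansions for $P_{T|Y=y}$ and $P_{X|Y=y}$. First I would use only the two Markov relations that are actually needed: the equalized-odds condition gives $S-T-Y$, hence the matrix identity $P_{S|Y}=P_{S|T}P_{T|Y}$, and $T-X-Y$ gives $P_{T|Y}=P_{T|X}P_{X|Y}$. Inverting column-wise and invoking the marginal identities $P_{S|T}^{-1}P_S=P_T$ and $P_{T|X}^{-1}P_T=P_X$, the perturbation propagates linearly:
$$P_{T|Y=y}=P_T+\epsilon\,P_{S|T}^{-1}J_y,\qquad P_{X|Y=y}=P_X+\epsilon\,P_{T|X}^{-1}P_{S|T}^{-1}J_y.$$

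Next I would verify that these are genuine probability vectors on the stated $\epsilon$-ranges. Since $P_{S|T}$ and $P_{T|X}$ are column-stochastic, their inverses fix the all-ones covector, so properties \eqref{0} and \eqref{sum} are preserved: the transported perturbations sum to zero and average to zero, giving sum-to-one and marginal consistency. Non-negativity is where the thresholds enter. From \eqref{1} I would bound $\|J_y\|_2\le\sqrt{\max_s P_S(s)}$, then amplify by operator norms to get $\|P_{S|T}^{-1}J_y\|_2\le \sigma_{\max}(P_{S|T}^{-1})\sqrt{\max_s P_S(s)}$ and $\|P_{T|X}^{-1}P_{S|T}^{-1}J_y\|_2\le \sigma_{\max}(P_{T|X}^{-1}P_{S|T}^{-1})\sqrt{\max_s P_S(s)}$. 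Requiring the entrywise perturbation to stay below $\min_t P_T(t)$ and $\min_x P_X(x)$, and using $\sigma_{\max}(P_{S|T}^{-1})=1/\sigma_{\min}(P_{S|T})$, reproduces exactly the two upper bounds on $\epsilon$ in the statement, which is a strong consistency check that the derivation is on the right track.

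With admissible perturbations in hand, I would apply the standard second-order local expansion of KL divergence: for a reference $P$ and a zero-sum perturbation $\delta$, $D(P+\delta\|P)=\tfrac12\sum_i \delta_i^2/P_i+o(\|\delta\|_2^2)=\tfrac12\|[\sqrt{P}^{-1}]\delta\|_2^2+o(\|\delta\|_2^2)$, obtained from $\log(1+u)=u-\tfrac12 u^2+\cdots$ with $u=\delta_i/P_i$ after the linear terms cancel. Applying this with $\delta=\epsilon\,P_{S|T}^{-1}J_y$ inside $I(T;Y)=\sum_y P_y D(P_{T|Y=y}\|P_T)$, and with $\delta=\epsilon\,P_{T|X}^{-1}P_{S|T}^{-1}J_y$ inside $I(X;Y)=\sum_y P_y D(P_{X|Y=y}\|P_X)$, and pulling out $\tfrac12\epsilon^2$, yields precisely \eqref{appty} and \eqref{appxy}.

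I expect the algebra above to be routine; the genuine obstacle is the error control, namely showing the remainder is $o(\epsilon^2)$ \emph{uniformly} in $y$. This should follow because the alphabets are finite and, by the norm bounds just derived, the perturbation vectors $\delta$ are uniformly bounded and stay uniformly bounded away from the simplex boundary for all admissible $\epsilon$, so the third-order Taylor remainder of $x\mapsto x\log x$ is uniformly $O(\epsilon^3)$; summing the finitely many entries and then the $P_y$-weighted average preserves the $o(\epsilon^2)$ order. A small point worth checking explicitly is that the singular-value thresholds keep every perturbed coordinate \emph{strictly} positive, so that the logarithms and the expansion are valid throughout.
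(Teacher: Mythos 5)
Your proposal is correct and follows essentially the same route as the paper's proof: transporting the perturbation $\epsilon J_y$ through $P_{S|T}^{-1}$ and then $P_{T|X}^{-1}$ via the Markov chains $S-T-Y$ and $T-X-Y$, applying the second-order expansion of the KL divergence, and deriving the same $\epsilon$-thresholds from $\|J_y\|\le\sqrt{\max_s P_S(s)}$ together with the operator-norm bound and $\sigma_{\max}(P_{S|T}^{-1})=1/\sigma_{\min}(P_{S|T})$. Your added remarks on preservation of the zero-sum property and on uniform error control are sound elaborations of what the paper leaves implicit, but they do not change the argument.
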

\textbf{Proof:}
	To approximate $I(T;Y)$, using the Markov chain $S-T-Y$ and invertible $P_{S|T}$ we can rewrite $P_{T|Y=y}$ as perturbations of $P_T$ as follows:
	$P_{T|Y=y}=P_{S|T}^{-1}[P_{S|Y=y}-P_S]+P_T=\epsilon\cdot P_{S|T}^{-1}J_y+P_T.$
	Then, by using local approximation of the KL-divergence which is based on the second order Taylor expansion of $\log(1+x)$ we get
	$I(Y;T)=\sum_y P_Y(y)D(P_{T|Y=y}||P_T)=\sum_y P_Y(y)\times\\\sum_t\! P_{T|Y=y}(t)\log\!\!\left(\!1\!+\!\epsilon\frac{P_{S|T}^{-1}J_y(t)}{P_T(t)}\right)=\frac{1}{2}\epsilon^2\sum_y P_Y(y)\times\\\sum_t
	\frac{(P_{S|T}^{-1}J_y(t))^2}{P_T(t)}+o(\epsilon^2).$
	For more details see \cite[Proposition 3]{khodam}. Due to the Taylor expansion, we must have $|\epsilon\frac{P_{S|T}^{-1}J_y(t)}{P_T(t)}|<1$ for all $t$ and $y$. A sufficient condition for $\epsilon$ to satisfy this inequality is to have $\epsilon<\frac{|\sigma_{\text{min}}(P_{S|T})|\min_{t\in\mathcal{T}}P_T(t)}{\sqrt{\max_{s\in{\mathcal{X}}}P_S(s)}}$, since in this case we have
\begin{align*}
\epsilon^2|P_{S|T}^{-1}J_y(t)|^2&\leq\epsilon^2\left\lVert P_{S|T}^{-1}J_y\right\rVert^2\leq\epsilon^2 \sigma_{\max}^2\left(P_{S|T}^{-1}\right)\left\lVert J_y\right\rVert^2\\&\stackrel{(a)}\leq\frac{\epsilon^2\max_{s\in{\mathcal{S}}}P_S(s)}{\sigma^2_{\text{min}}(P_{S|T})}<\min_{t\in\mathcal{T}} P_T^2(t),
\end{align*}
which implies $|\epsilon\frac{P_{S|T}^{-1}J_y(t)}{P_T(t)}\!|<1$. The step (a) follows from $\sigma_{\max}^2\left(P_{S|T}^{-1}\right)=\frac{1}{\sigma_{\min}^2\left(P_{S|T}\right)}$ and $\|J_y\|^2\leq\max_{s\in{\mathcal{S}}}P_S(s)$. The latter inequality follows from \eqref{1} since we have\\
$\frac{\|J_y\|^2}{\max_{s\in{\mathcal{S}}}P_S(s)}\leq \sum_{s\in\mathcal{S}}\frac{J_y^2(s)}{P_S(s)}\leq 1.$
Furthermore, using the Markov chain $T-X-Y$ and invertible $P_{T|X}$ we can rewrite $P_{X|Y=y}$ as perturbations of $P_X$ as follows
	$P_{X|Y=y}\!=\!P_{T|X}^{-1}[P_{T|Y=y}-P_T]\!+\!P_X\!=\!\epsilon P_{T|X}^{-1}P_{S|T}^{-1}J_y\!+\!P_X.$
    Then, using similar approach we have
	$I(Y;X)=\sum_y P_Y(y)D(P_{X|Y=y}||P_X)\!\!=\sum_y P_Y(y)\times\\\sum_x\! P_{X|Y=y}(x)\log\!\!\left(\!1\!+\!\epsilon\frac{P_{T|X}^{-1}P_{S|T}^{-1}J_y(t)}{P_X(x)}\right)\!=\frac{1}{2}\epsilon^2\sum_y P_Y(y)\times\\\sum_x
	\frac{(P_{T|X}^{-1}P_{S|T}^{-1}J_y(x))^2}{P_X(x)}+o(\epsilon^2).$
    Finally, for approximating $I(X;Y)$ we must have $|\epsilon\frac{P_{T|X}^{-1}P_{S|T}^{-1}J_y(x)}{P_x(x)}|<1$ for all $x$ and $y$. A sufficient condition for $\epsilon$ to satisfy this inequality is to have $\epsilon<\frac{\min_{x\in\mathcal{X}}P_X(x)}{|\sigma_{\text{max}}(P_{T|X}^{-1}P_{S|T}^{-1})|\sqrt{\max_{s\in{\mathcal{S}}}P_S(s)}}$, since in this case we have
    \begin{align*}
&\!\!\!\!\!\!\epsilon^2|P_{T|X}^{-1}P_{S|T}^{-1}J_y(x)|^2\leq\epsilon^2\left\lVert P_{T|X}^{-1}P_{S|T}^{-1}J_y\right\rVert^2\\&\!\!\!\!\!\!\leq\epsilon^2 \sigma_{\max}^2\left(P_{T|X}^{-1}P_{S|T}^{-1}\right)\left\lVert J_y\right\rVert^2\\&\!\!\!\!\!\!\leq\epsilon^2\max_{s\in{\mathcal{S}}}P_S(s)\sigma^2_{\text{max}}(P_{T|X}^{-1}P_{S|T}^{-1})<\min_{x\in\mathcal{X}} P_X^2(x).
\end{align*}
Let $L_y\triangleq[\sqrt{P_S}^{-1}]J_y$, $W^{T;Y}\!\triangleq\![\sqrt{P_T}^{-1}]P_{S|T}^{-1}[\sqrt{P_S}]$, $W^{X;Y}\!\triangleq\![\sqrt{P_X}^{-1}]P_{T|X}^{-1}P_{S|T}^{-1}[\sqrt{P_S}]$,\\ $c_1\triangleq \frac{\min_{x}P_X(x)}{|\sigma_{\text{max}}(P_{T|X}^{-1}P_{S|T}^{-1})|\sqrt{\max_{s}P_S(s)}}$, $c_2\!\triangleq\! \frac{|\sigma_{\text{min}}(P_{S|T})|\min_{t}P_T(t)}{\sqrt{\max_{s}P_S(s)}}$.
\begin{theorem}\label{th1}
    For all $\epsilon<\min\{c_1,c_2\}$, \eqref{problem} can be approximated by the following problem
    \begin{align}\label{approx}
\eqref{problem}\simeq P_2\triangleq\!\!\!\!\!\!\!\!\!\!\!\!\!\!\!\!\max_{\begin{array}{c} 
	\substack{P_{y},L_y:L_y\perp \sqrt{P_S},\\\sum_y P_yL_y=0,\\ \|L_y\|^2\leq 1,\\ \frac{1}{2}\epsilon^2\sum_y P_y\|W^{X;Y}L_y\|^2\leq r}
	\end{array}}\!\!\!\!\!\!\!\frac{1}{2}\epsilon^2\sum_y P_y\|W^{T;Y}L_y\|^2. 
    \end{align}
\end{theorem}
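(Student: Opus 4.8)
The plan is to transport every constraint and every information quantity in \eqref{problem} through the perturbation parametrization and then invoke Lemma~\ref{lem1}. First I would replace the optimization variable $P_{Y|X}$ by the pair $\{P_y,J_y\}$ obtained from $P_{S|Y=y}=P_S+\epsilon\cdot J_y$. The equalized-odds constraint $I(Y;S\mid T)=0$ is the Markov chain $S-T-Y$, which together with the invertibility of $P_{S|T}$ and $P_{T|X}$ lets me recover $P_{T|Y=y}=\epsilon\,P_{S|T}^{-1}J_y+P_T$ and $P_{X|Y=y}=\epsilon\,P_{T|X}^{-1}P_{S|T}^{-1}J_y+P_X$, so that the whole feasible configuration is determined by $\{P_y,J_y\}$. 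The requirement that each $P_{S|Y=y}$ be a probability vector, the marginal consistency $\sum_y P_y P_{S|Y=y}=P_S$, and the point-wise $\chi^2$ bound \eqref{local1} are then exactly the three properties \eqref{0}, \eqref{sum}, and \eqref{1}.

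Next I would change variables to $L_y\triangleq[\sqrt{P_S}^{-1}]J_y$, equivalently $J_y=[\sqrt{P_S}]L_y$, and rewrite the three properties. Property \eqref{1} is literally $\|L_y\|^2\le 1$. Property \eqref{0} reads $\sum_s\sqrt{P_S(s)}\,L_y(s)=\langle\sqrt{P_S},L_y\rangle=0$, i.e.\ $L_y\perp\sqrt{P_S}$. Property \eqref{sum} reads $[\sqrt{P_S}]\sum_y P_y L_y=0$, hence $\sum_y P_y L_y=0$ because $[\sqrt{P_S}]$ is invertible. These are precisely the feasibility constraints appearing in $P_2$.

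It then remains to transport the objective and the rate constraint. Here I would invoke Lemma~\ref{lem1}, valid for $\epsilon<\min\{c_1,c_2\}$, and substitute $J_y=[\sqrt{P_S}]L_y$ into \eqref{appty} and \eqref{appxy}. Using the definitions $W^{T;Y}=[\sqrt{P_T}^{-1}]P_{S|T}^{-1}[\sqrt{P_S}]$ and $W^{X;Y}=[\sqrt{P_X}^{-1}]P_{T|X}^{-1}P_{S|T}^{-1}[\sqrt{P_S}]$, the bracketed diagonal factors combine to give $[\sqrt{P_T}^{-1}]P_{S|T}^{-1}J_y=W^{T;Y}L_y$ and $[\sqrt{P_X}^{-1}]P_{T|X}^{-1}P_{S|T}^{-1}J_y=W^{X;Y}L_y$. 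Hence $I(Y;T)\simeq\tfrac12\epsilon^2\sum_y P_y\|W^{T;Y}L_y\|^2$ becomes the objective and $I(X;Y)\le r$ becomes $\tfrac12\epsilon^2\sum_y P_y\|W^{X;Y}L_y\|^2\le r$, yielding $P_2$ up to $o(\epsilon^2)$.

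The main obstacle is not this algebra but justifying the symbol $\simeq$ at the level of \emph{optimal values} rather than of a single configuration. Lemma~\ref{lem1} provides a per-distribution expansion, so one must argue that the $o(\epsilon^2)$ remainders in $I(Y;T)$ and $I(X;Y)$ are uniform over the feasible set, so that the supremum and the remainder may be interchanged to leading order. This is exactly where the bound $\|L_y\|^2\le 1$ is decisive: it confines the perturbations to a fixed ball on which, for $\epsilon<\min\{c_1,c_2\}$, the second-order Taylor remainder of the KL divergences is controlled uniformly. A secondary point is that $\sup$ in \eqref{problem} is claimed to become $\max$ in $P_2$; I would address this by a Carath\'eodory-type support-size bound on $|\mathcal{Y}|$, which together with $\|L_y\|^2\le1$ renders the feasible set of $P_2$ compact and the maximum attained.
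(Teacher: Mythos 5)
Your proposal is correct and follows essentially the same route as the paper's own (very terse) proof: invoke Lemma~\ref{lem1} for the quadratic approximations of $I(Y;T)$ and $I(X;Y)$, and translate the perturbation properties \eqref{0}, \eqref{sum}, \eqref{1} into the constraints $L_y\perp\sqrt{P_S}$, $\sum_y P_y L_y=0$, $\|L_y\|^2\leq 1$ via the change of variables $L_y=[\sqrt{P_S}^{-1}]J_y$. Your additional care about uniformity of the $o(\epsilon^2)$ remainders over the feasible set and about attainment of the maximum goes beyond what the paper records, and is a welcome strengthening rather than a deviation.
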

\textbf{Proof:}
	The proof follows by Lemma \ref{lem1}. The constraints $L_y\perp \sqrt{P_S}$, $\sum_y P_yL_y=0$, and $\|L_y\|^2\leq 1$ are followed by \eqref{0}, \eqref{sum}, and \eqref{1}, for more detail see \cite{khodam}.
\begin{remark}
The compression-rate constraint $I(X;Y)\le r$ can be rewritten as
$
\frac{1}{2}\epsilon^{2}\sum_{y} P_{y}\,\|W^{X;Y} L_{y}\|^{2}\le r - o(\epsilon^{2}).
$
Since the compression threshold $r$ is typically much larger than the correlation parameter $\epsilon$, we neglect the $o(\epsilon^{2})$ term relative to $r$. We can also derive an upper bound on the error term and use it here; we leave this to future work.
\vspace{-2mm}
\end{remark}
Next, we present properties of $W^{X;Y}$ and $W^{T;Y}$ used to find bounds on \eqref{approx}; the proof follows \cite{khodam} and is omitted.
\begin{proposition}\label{pos3}
    Each of the matrices $W^{T;Y}$ and $W^{X;Y}$ has singular value $1$, with corresponding singular vector $\sqrt{P_S}$.\vspace{-2mm}
\end{proposition}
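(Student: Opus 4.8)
The plan is to exhibit an explicit singular triple directly, rather than to diagonalize $W^{T;Y}$ or $W^{X;Y}$. Recall that a scalar $\sigma$ is a singular value of a matrix $M$ with right singular vector $v$ and left singular vector $u$ precisely when $Mv=\sigma u$, $M^{\top}u=\sigma v$, and $\|u\|=\|v\|=1$. I would therefore conjecture that $\sqrt{P_S}$ maps to $\sqrt{P_T}$ under $W^{T;Y}$ and to $\sqrt{P_X}$ under $W^{X;Y}$, and then verify the two defining identities with $\sigma=1$.

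First I would compute the forward action. Using the diagonal-matrix identity $[\sqrt{P_S}]\sqrt{P_S}=P_S$ and the marginalization relation $P_{S|T}P_T=P_S$ (which inverts to $P_{S|T}^{-1}P_S=P_T$), I obtain
\begin{align*}
W^{T;Y}\sqrt{P_S}=[\sqrt{P_T}^{-1}]P_{S|T}^{-1}[\sqrt{P_S}]\sqrt{P_S}=[\sqrt{P_T}^{-1}]P_{S|T}^{-1}P_S=[\sqrt{P_T}^{-1}]P_T=\sqrt{P_T}.
\end{align*}
The same chain, now also using $P_{T|X}P_X=P_T$ so that $P_{T|X}^{-1}P_T=P_X$, gives $W^{X;Y}\sqrt{P_S}=\sqrt{P_X}$. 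Since $\|\sqrt{P_S}\|^2=\sum_s P_S(s)=1$ and likewise $\|\sqrt{P_T}\|=\|\sqrt{P_X}\|=1$, these are unit vectors, so the candidate singular value is already pinned to $1$.

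Next I would verify the adjoint action, which is the only step requiring a second structural fact. The key observation is that a conditional-distribution matrix is column-stochastic, i.e.\ $\mathbf{1}^{\top}P_{S|T}=\mathbf{1}^{\top}$, equivalently $P_{S|T}^{\top}\mathbf{1}=\mathbf{1}$; applying the inverse shows the inverse-transpose fixes the same vector, $(P_{S|T}^{-1})^{\top}\mathbf{1}=\mathbf{1}$, and similarly for $P_{T|X}$. Combining this with $[\sqrt{P_T}^{-1}]\sqrt{P_T}=\mathbf{1}$ yields
\begin{align*}
(W^{T;Y})^{\top}\sqrt{P_T}=[\sqrt{P_S}](P_{S|T}^{-1})^{\top}[\sqrt{P_T}^{-1}]\sqrt{P_T}=[\sqrt{P_S}](P_{S|T}^{-1})^{\top}\mathbf{1}=[\sqrt{P_S}]\mathbf{1}=\sqrt{P_S},
\end{align*}
and the analogous computation (peeling off $(P_{T|X}^{-1})^{\top}\mathbf{1}=\mathbf{1}$ first) gives $(W^{X;Y})^{\top}\sqrt{P_X}=\sqrt{P_S}$. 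Together with the forward action, this certifies that $\sigma=1$ is a singular value of each matrix with right singular vector $\sqrt{P_S}$ and left singular vector $\sqrt{P_T}$ (respectively $\sqrt{P_X}$); equivalently, $\sqrt{P_S}$ is a unit eigenvector of $W^{\top}W$ with eigenvalue $1$.

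The calculations are short, so the only real obstacle is conceptual: recognizing that two distinct stochasticity properties are doing the work. The marginalization identity $P_{S|T}P_T=P_S$ drives the forward map, while column-stochasticity $\mathbf{1}^{\top}P_{S|T}=\mathbf{1}^{\top}$ drives the adjoint; the mild subtlety is that each property must be transported through a matrix inverse, which is justified by simply applying $(\cdot)^{-1}$ to the fixed-vector identity. Invertibility of $P_{S|T}$ and $P_{T|X}$, assumed in the setup, is precisely what makes these inverse steps legitimate.
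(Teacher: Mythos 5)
Your proof is correct and complete: the direct verification of the singular triple ($W^{T;Y}\sqrt{P_S}=\sqrt{P_T}$ and $(W^{T;Y})^{\top}\sqrt{P_T}=\sqrt{P_S}$ with both vectors of unit norm, and likewise $W^{X;Y}\sqrt{P_S}=\sqrt{P_X}$, $(W^{X;Y})^{\top}\sqrt{P_X}=\sqrt{P_S}$), resting on marginalization ($P_{S|T}P_T=P_S$, $P_{T|X}P_X=P_T$) for the forward action and column-stochasticity transported through the inverse for the adjoint, is exactly the standard argument; the paper itself omits the proof and defers to its cited reference, where this same computation is the intended one. One point your write-up gets right that is worth emphasizing: you claim only that $1$ is \emph{a} singular value rather than the largest one, which is the correct reading of the proposition, since here (unlike for ordinary divergence transition matrices without inverses) $1$ need not be maximal --- indeed in the paper's example $W^{T;Y}$ has singular values $3.2034$ and $1$.
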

Next, we find lower bounds on \eqref{approx} and discuss their tightness. To this end, let $\sigma_{\max}^{T;Y}$ and $\sigma_{\max_2}^{T;Y}$ denote the largest and second-largest singular values of the matrix $W^{T;Y}$, with corresponding singular vectors $L_{\sigma}^{W^{T;Y}}$ and $L_{\sigma_2}^{W^{T;Y}}$, respectively.

\begin{theorem}\label{th2}
	If $\sigma_{\max}^{T;Y}>1$, we have
	\begin{align}
	P_2\geq \frac{1}{2}\epsilon^2\left(\frac{\sigma_{\max}^{T;Y}}{K}\right)^2
	\end{align}
	where $1\leq K$ is the smallest constant that satisfies 
	\\$\frac{1}{2}\epsilon^2\|W^{X;Y}L_{\sigma}^{W^{T;Y}}\|^2\leq rK^2.$
	If $\sigma_{\max}^{T;Y}=1$, then 
	\begin{align}\label{y}
	P_2\geq \frac{1}{2}\epsilon^2\left(\frac{\sigma_{\max_2}^{T;Y}}{K}\right)^2,
	\end{align}
    where $1\leq K$ is the smallest constant that satisfies 
	\\$\frac{1}{2}\epsilon^2\|W^{X;Y}L_{\sigma_2}^{W^{T;Y}}\|^2\leq rK^2.$
    Finally, for $|\mathcal{S}|=2$, 
$
P_2 = \tfrac{1}{2}\epsilon^2\left(\tfrac{\sigma_{\max}^{T;Y}}{K}\right)^2$ or $ 
P_2 = \tfrac{1}{2}\epsilon^2\left(\tfrac{\sigma_{\max_2}^{T;Y}}{K}\right)^2,
$
depending on $\sigma_{\max}^{T;Y}$.
\end{theorem}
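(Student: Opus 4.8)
The plan is to prove the lower bounds by exhibiting explicit feasible points of $P_2$ built from the relevant singular vector of $W^{T;Y}$, and then to establish the equalities for $|\mathcal{S}|=2$ by a dimension-counting argument that reduces the program to a scalar optimization. Throughout I would lean on Proposition \ref{pos3}: $\sqrt{P_S}$ is a right singular vector of both $W^{T;Y}$ and $W^{X;Y}$ with singular value $1$ (one checks directly that $W^{T;Y}\sqrt{P_S}=\sqrt{P_T}$). Since the constraints force every $L_y\perp\sqrt{P_S}$, the feasible directions live in the orthogonal complement of this singular vector, and the objective only ever evaluates $W^{T;Y}$ on that complement.

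For the lower bounds, let $v$ denote $L_{\sigma}^{W^{T;Y}}$ when $\sigma_{\max}^{T;Y}>1$ and $L_{\sigma_2}^{W^{T;Y}}$ when $\sigma_{\max}^{T;Y}=1$. Because right singular vectors for distinct singular values are orthogonal (and within a degenerate eigenspace one may choose them orthogonal), in either case $v\perp\sqrt{P_S}$, so $v$ is an admissible direction. I would then plug in the two-atom ensemble $P_{y_1}=P_{y_2}=\tfrac12$ with $L_{y_1}=\tfrac1K v$ and $L_{y_2}=-\tfrac1K v$. This satisfies $\sum_y P_y L_y=0$ and $\|L_y\|^2=1/K^2\le 1$ since $K\ge 1$; the compression constraint becomes exactly $\tfrac12\epsilon^2\|W^{X;Y}v\|^2\le rK^2$, which is the defining inequality for the smallest such $K$; and the objective evaluates to $\tfrac12\epsilon^2(\|W^{T;Y}v\|/K)^2=\tfrac12\epsilon^2(\sigma/K)^2$ with $\sigma=\sigma_{\max}^{T;Y}$ or $\sigma_{\max_2}^{T;Y}$ respectively. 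As $P_2$ is a supremum, the claimed bounds follow.

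For the equalities when $|\mathcal{S}|=2$, the key observation is dimensional: in $\mathbb{R}^2$ the orthogonal complement of $\sqrt{P_S}$ is one-dimensional, so every feasible $L_y$ must be $\alpha_y v$ for the single unit vector $v\perp\sqrt{P_S}$ and scalars $|\alpha_y|\le 1$. This $v$ is exactly the non-$\sqrt{P_S}$ singular vector, so $\|W^{T;Y}v\|=\sigma$ and $\|W^{X;Y}v\|$ are fixed constants. The program then collapses to maximizing $m\triangleq\sum_y P_y\alpha_y^2$ subject to $\sum_y P_y\alpha_y=0$, $|\alpha_y|\le 1$, and $\tfrac12\epsilon^2\|W^{X;Y}v\|^2\,m\le r$. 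I would show the largest $m$ compatible with the zero-mean and boundedness constraints is $1$ (attained by a balanced $\pm1$ choice), so the binding constraint gives $m^{\star}=\min\{1,\,2r/(\epsilon^2\|W^{X;Y}v\|^2)\}=1/K^2$, which matches the definition of $K$. Substituting yields $P_2=\tfrac12\epsilon^2(\sigma/K)^2$, with $\sigma$ read off as $\sigma_{\max}^{T;Y}$ if the non-$\sqrt{P_S}$ singular value exceeds $1$ and $\sigma_{\max_2}^{T;Y}$ otherwise.

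I expect the main obstacle to be the bookkeeping around singular-value multiplicities: verifying that the selected singular vector really is orthogonal to $\sqrt{P_S}$ in the degenerate cases (for instance when the value $1$ has multiplicity, or when there are ties at the second level) requires using the freedom to pick an orthonormal singular basis adapted to the $\sqrt{P_S}$ direction. The other point needing care is the scalar reduction: establishing $\max m = 1$ under the balance constraint and matching it to $1/K^2$ is elementary but must be argued as an exact identity in order to upgrade the lower bound to an equality. Finally, it is worth flagging why the equality is confined to $|\mathcal{S}|=2$: for larger alphabets a feasible $L_y$ can mix several singular directions while remaining orthogonal to $\sqrt{P_S}$, so the single-direction construction no longer exhausts the feasible set and only the inequality survives.
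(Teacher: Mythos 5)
Your proposal is correct and takes essentially the same route as the paper: the lower bounds come from the identical construction (uniform binary $Y$ with $L_1=v/K$, $L_2=-v/K$ for the appropriate singular vector $v$ of $W^{T;Y}$, whose orthogonality to $\sqrt{P_S}$ follows from Proposition~\ref{pos3}), and the $|\mathcal{S}|=2$ equality rests on the same dimension-counting observation that the orthogonal complement of $\sqrt{P_S}$ in $\mathbb{R}^2$ is one-dimensional. Your explicit scalar reduction to maximizing $m=\sum_y P_y\alpha_y^2$ and the verification that $m^{\star}=1/K^2$ actually supplies detail that the paper's one-line argument for the equality case leaves implicit, but it is the same underlying idea rather than a different method.
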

\textbf{Proof:}
To derive the first lower bound, let $\sigma_{\max}^{T;Y}>1$. Using Proposition \ref{pos3}, we have $L_{\sigma}^{W^{T;Y}}\perp \sqrt{P_S}$. Let $Y$ be a uniform binary RV. Furthermore, let $L_1=\frac{L_{\sigma}^{W^{T;Y}}}{K}$ and $L_2=\frac{-L_{\sigma}^{W^{T;Y}}}{K}$. Clearly, $L_1\perp \sqrt{P_S}$ and $L_2\perp \sqrt{P_S}$ and $P_Y(1)L_1+P_Y(2)L_2=0$. Moreover, we have
$
\frac{1}{2}\epsilon^2\!\!\left(\sum_y \!\!P_Y\|W^{T;Y}L_y\|^2\right)=\frac{1}{2}\epsilon^2\left(\frac{\sigma_{\max}^{T;Y}}{K}\right)^2. 
$	
The proof of \eqref{y} is similar. To prove the last argument, note that when $|\mathcal{S}|=2$, the matrix $W^{T;Y}$ has two singular vectors, one of which is $\sqrt{P_S}$. Hence, the only feasible direction is the other singular vector which completes the proof.\\
After finding $L_y$ and $P_Y$ attaining the lower bounds in Theorem \ref{th2}, we have
$P_{S|Y=0}=P_S+\epsilon[\sqrt{P_S}]L_1$, 
$P_{S|Y=1}=P_S+\epsilon[\sqrt{P_S}]L_2$,
$P_{T|Y=0}=P_T+\epsilon P_{S|T}^{-1}[\sqrt{P_S}]L_1$,
$P_{T|Y=1}=P_T+\epsilon P_{S|T}^{-1}[\sqrt{P_S}]L_2$,
$P_{X|Y=0}=P_X+\epsilon P_{T|X}^{-1}P_{S|T}^{-1}[\sqrt{P_S}]L_1$,
$P_{X|Y=1}=P_X+\epsilon P_{T|X}^{-1}P_{S|T}^{-1}[\sqrt{P_S}]L_2$,
where $L_1$, $L_2$ and marginal distribution of $Y$ are obtained in Theorem \ref{th2}. 
\begin{remark}
    The Markov chain $S-T-X$ helps us verify that the distributions achieving the lower bounds also satisfy the Markov chain $S-X-Y$. Latter follows since we have
$P_{S|X}P_{X|Y=0}=P_S+\epsilon P_{S|X}P_{T|X}^{-1}P_{S|T}^{-1}[\sqrt{P_S}]L_1\stackrel{(a)}{=}P_S+\epsilon[\sqrt{P_S}]L_1=P_{S|Y=0},$
    where (a) follows from $P_{S|X}P_{T|X}^{-1}=P_{S|T}$, since the Markov chain $S-T-X$ holds.\vspace{-2mm}
\end{remark}
Next, we present an example to evaluate the proposed approach and compare the results with \eqref{problem} and \eqref{prob2}.\\
\textbf{Example:}
	Let $P_{T|X}=\begin{bmatrix}
	\frac{1}{4}       & \frac{2}{5}  \\
	\frac{3}{4}    & \frac{3}{5}
	\end{bmatrix}$
	and $P_X$ be given as $[\frac{1}{4} , \frac{3}{4}]^T$. Furthermore, let $P_{S|T}=\begin{bmatrix}
	\frac{1}{2}       & \frac{1}{5}  \\
	\frac{1}{2}    & \frac{4}{5}
	\end{bmatrix}$. Thus, we find $P_T\!=\![0.3625, 0.6375]^T$ and $P_S=P_{S|T}P_T=[0.3088, 0.6913]^T$. Furthermore we find $W^{T;Y}=\begin{bmatrix}
	2.4610    & -0.9206  \\
	-1.1599    &  1.7355 
	\end{bmatrix}$ and $W^{X;Y}=\begin{bmatrix}
	-16.7931    & 11.8246  \\
	-10.3371    &  -5.8669 
	\end{bmatrix}.$
	The singular values of $W^{T;Y}$ are $3.2034$ and $1$ with corresponding right singular vectors $[-0.8314, 0.5557]^T$ and $[0.5557 , 0.8314]^T$, respectively.  Moreover, the
    singular values of $W^{X;Y}$ are $23.7087$ and $1$ with corresponding right singular vectors $[0.8314, -0.5557]^T$ and $[0.5557 , 0.8314]^T$, respectively. Here, we let $\epsilon\in [0.005, 0.05]$ and $r=0.2$.
     $L_1$ equals to $[-0.8314, 0.5557]^T$ and we can check the the compression rate constraint is satisfied for $L_1$ which yields $K=1$. Using Theorem 2, we have $P_2=\frac{1}{2}\epsilon^2(3.2034)^2$ and the lower bound is tight since $|\mathcal{S}|=2$.
In Fig. \ref{geo4}, we compare $P_2$ and the optimal solutions of \eqref{problem} and \eqref{prob2} using exhaustive search. We recall that $P_2$ is the approximate of \eqref{problem} and here achieves the lower bound in Theorem 2. We can see that $g^{r}_{\epsilon}(P_{S,X,T})$ dominates $g^{r}_{\epsilon,\chi^2}(P_{S,X,T})$, and the gap between exact values of \eqref{problem} and \eqref{prob2} is decreasing as $\epsilon$ increases. Finally, the gap between $P_2$ and \eqref{problem} is small in the high privacy regimes. Intuitively, the blue curve dominates \eqref{problem} (red curve), since, compared to \eqref{prob2}, \eqref{problem} imposes the point-wise $\chi^2$-criterion in addition to the equalized odds.\vspace{-4mm}
\begin{figure}[]
	\centering
	\includegraphics[width = 0.48\textwidth]{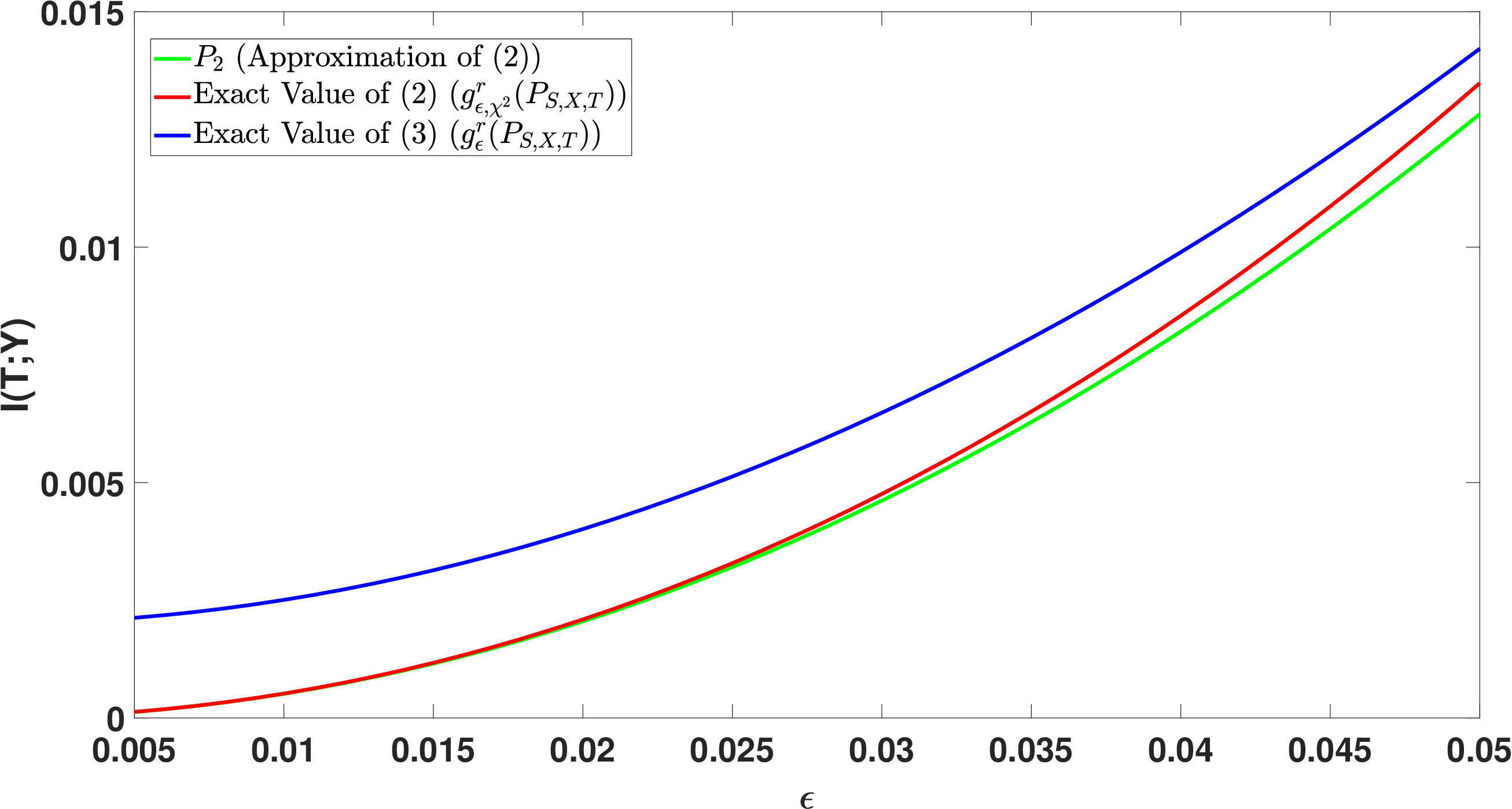}
	\caption{Comparing our method with the optimal solutions of \eqref{problem} and \eqref{prob2}. In high-privacy regimes, $P_2$ is close to \eqref{problem} (via exhaustive search), and $g^{r}_{\epsilon}(P_{S,X,T})$ dominates $g^{r}_{\epsilon,\chi^2}(P_{S,X,T})$.
}
	\label{geo4}
    \vspace{-5mm}
\end{figure}
\section{conclusion}\label{concul}
\vspace{-3mm}
We have shown that information geometry can be used to simplify an information-theoretic fair mechanism design problem with equalized odds as the fairness constraint. When a small $\epsilon$ correlation threshold is allowed, simple approximate solutions are derived. Specifically, we look for vectors satisfying the fairness constraints of having the largest Euclidean norm, leading to finding the largest principle singular value and vector of a matrix. The proposed approach establishes a useful and general design framework, which has been used in other privacy design problems such as considering point-wise $\chi^2$ and $\ell_1$ criterion in the literature.

\label{sec:refs}



\clearpage
\bibliographystyle{IEEEtran}
 \bibliography{IEEEabrv,ICASSP2026}

\end{document}